\newtheorem{lemma}{Lemma}
\newcommand{\bH}{\widehat{H}}
\newcommand{\bJ}{\widehat{J}}
\newcommand{\bL}{\widehat{L}}
\newcommand{\bS}{\widehat{S}}
\newcommand{\bW}{\widehat{W}}
\newcommand{\bX}{\widehat{X}}
\newcommand{\cD}{\mathcal{D}}
\newcommand{\cL}{\mathcal{L}}
\newcommand{\cK}{\mathcal{K}}
\newcommand{\ocK}{\overline{\mathcal{K}}_0}
\newcommand{\ocR}{\overline{\mathcal{R}}_0}
\newcommand{\cH}{\mathcal{H}}
\newcommand{\dotex}{\frac{d}{dt}}
\newcommand{\Tr}[1]{\rm{Tr}\left(#1\right)}
\title{\LARGE \bf Heisenberg formulation of adiabatic elimination
for open quantum systems with two time-scales}
\author{Fran\c{c}ois-Marie Le R\'{e}gent \thanks{
Alice\&Bob,
53 boulevard du G\'{e}n\'{e}ral Martial Valin, 75015 Paris. \texttt{francois-marie.le-regent@alice-bob.com}} $~^{\dag}$ \quad Pierre Rouchon \thanks{Laboratoire de Physique de l’Ecole normale sup\'{e}rieure, Mines Paris-PSL, Inria, ENS-PSL, Universit\'{e} PSL, CNRS, Paris. \texttt{pierre.rouchon@minesparis.psl.eu}}
}
\begin{document}

\maketitle
\thispagestyle{empty}
\pagestyle{empty}

\begin{abstract}

Consider an open quantum system governed by a Gorini–Kossakowski\\–Sudarshan–Lindblad  master equation with two times-scales: a fast one, exponentially converging towards a linear subspace of quasi-equilibria;  a slow one resulting small decoherence and Hamiltonian dynamics. Usually adiabatic elimination is performed in the Schr\"{o}dinger picture. We propose here an Heisenberg formulation where the invariant operators attached to the fast decay dynamics towards the quasi-equilibria subspace  play a key role. Based on geometric singular perturbations, asymptotic expansions of the Heisenberg slow dynamics and of the fast invariant linear subspaces are proposed. They exploit Carr's approximation lemma from center-manifold and bifurcation theory. Second-order expansions are detailed and shown to ensure preservation, up to second-order terms, of the  complete positivity for the slow  propagator   on a slow time-scale. Such expansions can be exploited numerically to derive reduced-order dynamical  models.

\end{abstract}

\section{Introduction}

In the quantum physics community, adiabatic elimination is widely used to analyze the dynamics of open and dissipative quantum systems (see e.g. \cite{BrionJPA07,ZanardiPRA2015,MacieszczakGutaLesanovskyEtAl2016,AzouitQST2017,BurgarthQ2019}. It corresponds in fact to a perturbation techniques known in dynamical and control system theory as singular perturbations for slow/fast systems. It is related to the Tikhonov approximation theorem (see, e.g., ~\cite{VerhulstBook2005,kokotovic-book-1}) and its coordinate-free formulation due to Fenichel~\cite{Fenichel79}. The notion of invariant slow manifolds plays a crucial role  for  dynamical systems with two time-scales dynamics: the fast and exponentially converging ones and the slow ones of reduced dimension. In this context, adiabatic elimination produces low dimensional dynamical models via the derivation from the original slow/fast differential equations of the slow differential equations governing the evolution on the invariant slow manifold. For open quantum system governed by the deterministic Gorini–Kossakowski–Sudarshan–Lindblad (GKSL) master equation, adiabatic elimination is usually performed in the Schr\"{o}dinger picture.

We perform here adiabatic elimination in the Heisenberg picture where the invariant operators associated to the fast dynamics play a crutial role: they are used to describe the slow dynamics but also to define the equations characterizing the fast invariant linear subspace. As far a we know, such Heisenberg point of view has not been considered in such a systematic and general way, despite the fact that, for stochastic quantum systems, the Heisenberg stochastic evolutions play a central role (see, e.g.,~\cite{BouteS2008,GoughJMP2010}). In particular, our derivation relies on very general two time-scale assumptions: we only assume an exponentially fast convergence towards a linear subspace of quasi-equilibria those structure does not necessarily correspond to a decohence free subspace; the slow dynamics can result from arbitrary perturbations either Hamiltonian or Lindbladian; we do not assume a tensor-product structure where the fast decay is due to local decoherence fast dynamics of some  sub-systems.

Combining two asymptotic expansions, a first one for the slow dynamics and a complementary one providing the  set of linear equations characterizing the exponentially fast decaying sub-space, we show how to approximate up to exponentially small corrections the propagator over a slow time-scale (see lemma~\ref{lem:Approx}). We explain how to compute the order $n$ corrections knowing the correction of order $r<n$. The second-order approximation of the slow dynamics is shown to preserve complete positivity in the following sense:  its second-order propagator over a slow time-scale corresponds, up to second-order correction, to a Trace Preserving and Completely Positive (TPCP) map (see lemma~\ref{lem:approx2}). Such preservation has been shown in specific cases for the second-order as in~\cite{MacieszczakGutaLesanovskyEtAl2016,AzouitQST2017} or for  first-order as in~\cite{ZanarC2014PRL,BurgarthPRA21}.

In section~\ref{sec:slowfastdyn}, the slow/fast structure of the GKLS differential equations is detailed either in Schr\"{o}dinger picture with an orthonormal basis $(\bS_d)_{1\leq d \leq \bar d}$ for  quasi-equilibria quantum states  but also in the Heisenberg picture with the associated basis of quasi-invariant operators $(\bJ_d)_{1\leq d \leq \bar d}$.
In section~\ref{sec:slowManifold}, we detail the asymptotic expansion of the slow dynamics with an Heisenberg point of view.
In section~\ref{sec:fastMani}, the set of independent linear equations describing the fast invariant subspace is constructed and its approximation at any order is given.
Section~\ref{sec:slowPropa} combines lemma~\ref{lem:slowdyn} of section~\ref{sec:slowManifold} and lemma~\ref{lem:fastInv} of section~\ref{sec:fastMani} to prove lemmas~\ref{lem:Approx} and~\ref{lem:approx2}, the approximate TPCP character  of slow propagators over a slow time-scale.

Throughout this paper, the underlying Hilbert space $\cH$ is assumed to be of finite dimension. This ensure uniqueness, existence and convergence of these asymptotic expansions versus the small parameter $\epsilon$. The calculations below use the language of operators. Thus they can be used, at least formally, even for an infinite dimensional Hilbert space despite the fact that precise mathematical justifications relying on functional analysis methods   are not straightforward.

\section{Slow/fast dynamics} \label{sec:slowfastdyn}

\subsection{Singular perturbations for finite dimensional, linear and time-invariant systems}

 Take a linear time-invariant system of finite dimension
 $$
 \dotex \xi = (A_0 + \epsilon A_1)\xi
 $$
 where $\xi$ is a real vector of finite dimension $\bar D$, $A_0$ and $A_1$ are $\bar D\times \bar D$ matrices with real entries and $\epsilon$ is a small parameter. Assuming a  slow/fast structure means that $A_0$ can be block diagonalized in two blocks:
  $$
  A_0= P_0 \left(
       \begin{array}{cc}
        0 & 0 \\
        0 & \Gamma_0 \\
       \end{array}
      \right)
  P_0^{-1}
  $$
  where $P_0$ is invertible and $\Gamma_0$ is an Hurwitz (stable) matrix of dimension $\bar D - \bar d>0$, the dimension of the fast dynamics and  where $\bar d>0 $ is the dimension of the slow dynamics.
  Standard perturbation theory (see~\cite{kato-book-66}) ensures that, for $\epsilon$ small enough, one has a similar block decomposition:
$$
  A_0+ \epsilon A_1= P(\epsilon) \left(
       \begin{array}{cc}
       \Delta(\epsilon) & 0 \\
        0 & \Gamma(\epsilon) \\
       \end{array}
      \right)
  P^{-1}(\epsilon)
$$
where the matrices $P(\epsilon)$, $ \Delta(\epsilon) $ and $\Gamma(\epsilon)$ are analytic versus $\epsilon$ with
$P(0)=P_0$, $\Delta(0)=0$ and $\Gamma(0)=\Gamma_0$. Geometrically, exists, for $\epsilon$ small enough, two invariant linear subspaces:
\begin{itemize}
 \item the slow one, of dimension $\bar d$, corresponding to the slow evolution governed by the propagator $e^{t\Delta(\epsilon)}$
 \item the fast one, of dimension $\bar D - \bar d$, corresponding to the fast and exponentially stable evolution governed by the propagator
 $e^{t\Gamma(\epsilon)}$.
\end{itemize}

\subsection{Slow/fast GKSL quantum dynamics}
All the developments below combine the above  dynamics  structure  with non commutative computations with operators  used to describe the decoherence dynamics of open-quantum systems.

  Consider the time-varying density operator $\rho_t$ on underlying Hilbert space $\cH$ of finite dimension obeying to the following dynamics
\begin{equation}\label{eq:dynL0L1}
 \dotex \rho_t = \cL_0(\rho_t) + \epsilon \cL_1(\rho_t)
\end{equation}
 where $\epsilon $ is a small positive parameter and where the GKSL linear super-operators $\cL_0$ and $\cL_1$ read ($\sigma=0,1$)
$$
 \cL_{\sigma}(\rho)= - i [\bH_\sigma,\rho]
 + \sum_{\nu} \bL_{\sigma,\nu}\rho \bL_{\sigma,\nu}^\dag - \tfrac{1}{2} \Big( \bL_{\sigma,\nu}^\dag \bL_{\sigma,\nu}\rho+
 \rho \bL_{\sigma,\nu}^\dag \bL_{\sigma,\nu} \Big)
$$
with $\bH_\sigma$ Hermitian operator and $\bL_{\sigma,\nu}$ any operator not necessarily Hermitian.

Assume that for $\epsilon=0$ and any initial condition $\rho_0$, $\rho_t$ converges exponentially towards a steady state depending a priori on $\rho_0$. This means that we have a TPCP map  $\ocK$ such that for any $\rho_0$:
\begin{equation}\label{eq:Kchanel}
\lim_{t\mapsto +\infty} e^{t\cL_0}(\rho_0) \triangleq \ocK(\rho_0)
\end{equation}
The range of $\ocK$ is denoted by $\cD_{0}$, the linear space of equilibria for $\cL_0$ corresponding to its kernel. Denote by $\bar d$ the dimension of $\cD_{0}$ and consider an orthonormal basis of $\cD_{0}$ made of $\bar d$ Hermitian operators $\bS_1$, \ldots, $\bS_{\bar d}$ such that $\Tr{\bS_d\bS_{d'}}=\delta_{d,d'}$.
To each $\bS_d$ is associated an invariant operator $\bJ_d=\lim_{t\mapsto +\infty} e^{t \cL_0^*}(\bS_d)$ being a steady-state of the adjoint dynamics (according to the Frobenius Hermitian product) $\dotex \bJ = \cL_0^{*}(\bJ)$ where $\cL_0^*$ is the adjoint of $\cL_0$ (see, e.g., \cite{AlberJ2014PRA}). For any solution $\rho_t$ of~\eqref{eq:dynL0L1} with $\epsilon=0$, $\Tr{\bJ_d\rho_t}$ is constant since $\mathcal{L}_0^*(\bJ_d)=0$ implies that
$$
\dotex \Tr{\bJ_d\rho_t}= \Tr{\bJ_d \mathcal{L}_0(\rho_t)}=  \Tr{ \mathcal{L}_0^*(\bJ_d)\rho_t}=0
.
$$
Thus one has:
\begin{equation}\label{eq:cK0}
\lim_{t\mapsto +\infty}\rho_t = \sum_{d=1}^{\bar d} \Tr{\bJ_d \rho_0} \bS_d \triangleq\ocK(\rho_0).
\end{equation}
Moreover $\Tr{\bJ_d \bS_{d'}} = \delta_{d,d'}$ since for any $t >0$
$$
\Tr{e^{t \cL_0^*}(\bS_d)~ \bS_{d'}}=\Tr{\bS_d~ e^{t \cL_0}(\bS_{d'}) }
$$
and $e^{t \cL_0}(\bS_{d'})= \bS_{d'}$.
Notice that the operator sub-space of co-dimension $\bar d$ defined by
$$
\Big\{ \rho ~\big|~ \forall d\in\{1,\ldots,\bar d\},~\Tr{\bJ_d \rho}=0 \Big\}
$$
corresponds to the set of trajectories exponentially converging to $0$, i.e. the fast invariant sub-space when $\epsilon=0$.

\section{Asymptotic expansion of the slow dynamics} \label{sec:slowManifold}

For $\epsilon >0$ and small, \eqref{eq:dynL0L1} admits also a $\bar d$ dimensional linear invariant subspace denoted by $\cD_{\epsilon}$ and close to $\cD_{0}$ (see~\cite{kato-book-66} for a mathematical justification). This means that the set of real variables $x_1=\Tr{\bJ_1 \rho}$, \ldots, $x_{\bar d}=\Tr{\bJ_{\bar d}\rho}$ can be chosen as local coordinates on $\cD_{\epsilon}$ with the perturbed operator basis $\bS_{1}(\epsilon)$, \ldots $\bS_{\bar d}(\epsilon)$.
If at some time $t$, the solution $\rho_t$~of the perturbed system~\eqref{eq:dynL0L1}, belongs to $\cD_{\epsilon}$, it remains on $\cD_{\epsilon}$ at any time:
$\dotex \rho_t =(\cL_0+\epsilon\cL_1)(\rho_t) $ where $\rho_t= \sum_{d=1}^{\bar d} x_d(t) \bS_{d}(\epsilon)$.
Thus for any $(x_1(t), \ldots, x_{\bar d}(t)) \in\mathbb{R}^{\bar d}$, this invariance property reads
\begin{equation}\label{eq:InvCond}
\sum_{d=1}^{\bar d} \frac{dx_d}{dt} ~ \bS_{d}(\epsilon)
= \left(\cL_0+ \epsilon \cL_1\right)\left(\sum_{d=1}^{\bar d} x_d \bS_{d}(\epsilon)
 \right)
 .
\end{equation}
Thus for any $d\in\{1,\ldots,\bar d\}$, $\frac{dx_d}{dt}$ depends linearly on $x=(x_1,\ldots,x_{\bar d})$, i.e.
$$
\dotex x_d = \sum_{d'}F_{d,d'}(\epsilon) x_{d'}
$$
where  $F_{d,d'}(\epsilon)$ are real coefficients to be  chosen in order to satisfy the
invariance conditions:
$$
\sum_{d'=1}^{\bar d} F_{d',d}(\epsilon) \bS_{d'}(\epsilon) = (\cL_0+\epsilon\cL_1)(\bS_d(\epsilon))
$$
for all  $d \in\{1,\ldots,\bar d\}$.
With the asymptotic expansion
$$
F_{d,d'}(\epsilon) =\sum_{n\geq 0} \epsilon^n F_{d,d'}^{(n)}, \quad \bS_{d}(\epsilon) = \sum_{n\geq 0} \epsilon^n \bS_{d}^{(n)}
$$
one can compute recursively $F_{d,d'}^{(n)}$ and $\bS_{d}^{(n)}$ from $F_{d,d'}^{(m)}$ and $\bS_{d}^{(m)}$ with $m < n$. The recurrence relationships is based on the identification of terms with same orders versus $\epsilon$ in the following equations: $\forall d \in\{1,\ldots,\bar d\}$
$$
  \sum_{d'=1}^{\bar d} \left(\sum_{n\geq 0} \epsilon^n F_{d',d}^{(n)}\right) \left(\sum_{n'\geq 0}\epsilon^{n'} \bS_{d'}^{(n')} \right)
  = \left(\cL_0 +\epsilon\cL_1\right)
\left(\sum_{n\geq 0} \epsilon^n\bS_{d}^{(n)} \right) .
$$
The zero-order term is satisfied with
$$F^{(0)}_{d,d'}=0, \quad \bS_{d}^{(0)} = \bS_d.$$ First-order conditions read: $\forall d \in\{1,\ldots,\bar d\}$
 $$
 \sum_{d''=1}^{\bar d} F_{d'',d}^{(1)} \bS_{d''}^{(0)} = \cL_0(\bS_{d}^{(1)} ) + \cL_1( \bS_{d}^{(0)})
.
 $$
Left multiplication by operator $\bJ_{d'}$ and taking the trace yields
 \begin{equation}\label{eq:F1}
 F_{d',d}^{(1)} = \Tr{\bJ_{d'} \cL_1( \bS_{d}{^{(0)}})}
 \end{equation}
since $\Tr{\bJ_{d'} \bS_{d''}^{(0)}}= \delta_{d',d''}$ and $\Tr{\bJ_{d'}\cL_0(\bW)}=0$ for any operator $\bW$ because $\cL_0^*(\bJ_{d'})=0$.
Thus $\bS_{d}^{(1)}$ is a solution $\bX$ of the following equation:
$$
 \cL_0(\bX) = \sum_{d'} \Tr{\bJ_{d'} \cL_1(\bS^{(0)}_d )} \bS_{d'} - \cL_1(\bS_d^{(0)})
 = \ocK\big(\cL_1(\bS_d^{(0)}) \big) - \cL_1(\bS_d^{(0)})
$$
where the quantum channel $\ocK$ is defined in~\eqref{eq:Kchanel}. Following~\cite{AzouitQST2017}, the general solution $\bX$ is given by the absolutely converging integral,
$$
 \int_{0}^{+\infty} e^{s\cL_0}\left(\cL_1(\bS_d^{(0)}) - \ocK\big(\cL_1(\bS_d^{(0)}) \big) \right) ~ds +\bW
$$
where $\bW$ belongs to $\cD_0$ the kernel of $\cL_0$. We consider here the solution with $\bW=0$ and thus
$$
\bS_{d}^{(1)}= \int_{0}^{+\infty} e^{s\cL_0}\left(\cL_1(\bS_d^{(0)}) - \ocK\big(\cL_1(\bS_d^{(0)}) \big) \right) ~ds
$$
where for all $d'$, $\Tr{\bJ_{d'} \bS_{d}^{(1)}}=0$.
The super-operator $\ocR$ defined for any operator $\bW$ by
\begin{equation}\label{eq:R}
\ocR(\bW) = \int_{0}^{+\infty} e^{s\cL_0}\left(\bW - \ocK\big(\bW\big) \right) ~ds
\end{equation}
provides thus the unique solution $\bX=\ocR(\bW)$ of
$\cL_0(\bX)= \ocK(\bW) - \bW $
such that for all $d$, $\Tr{\bJ_d \bX}=0$. To summarize the above calculation:
\begin{equation}\label{eq:S1}
\bS_{d}^{(1)}= \ocR\big(\cL_1(\bS_d) \big)
.
\end{equation}

Take $n\geq 2$ and assume that we have computed all the terms $F_{d',d}^{(r)}$ and $\bS_{d'}^{(r)}$ of order $r< n$ with
 $\Tr{\bJ_{d'} \bS_{d}^{(r)}}=0$ for all $d$ and $d'$. Invariance conditions of order $n$ read: $\forall d \in\{1,\ldots,\bar d\}$
$$
\sum_{d''=1}^{\bar d} \sum_{r=1}^{n} F_{d'',d}^{(r)}
 \bS_{d''}^{(n-r)} = \cL_0(\bS_{d}^{(n)}) + \cL_1(\bS_{d}^{(n-1)})
 .
$$
Left multiplication by operator $\bJ_{d'}$ and taking the trace yields
\begin{equation}\label{eq:Fn}
 F_{d',d}^{(n)} = \Tr{\bJ_{d'} \cL_1( \bS_{d}^{(n-1)})}
 .
\end{equation}
Then $\bS_{d}^{(n)}$ is as follows
\begin{equation}\label{eq:Sn}
 \bS_{d}^{(n)} = \ocR\left( \cL_1(\bS_{d}^{(n-1)}) - \sum_{d''=1}^{\bar d} \sum_{r=1}^{n} F_{d'',d}^{(r)}
 \bS_{d''}^{(n-r)} \right)
\end{equation}
and satisfies for all $d'$, $\Tr{\bJ_{d'} \bS_{d}^{(n)}}=0$.

 With such asymptotic expansion, we get an order $n$ approximation of the dynamics on the invariant slow-manifold $\cD_\epsilon$.
 \begin{lemma} \label{lem:slowdyn}
Take the slow-fast dynamics~\eqref{eq:dynL0L1}. For $\epsilon$ small enough, it admits a unique invariant slow manifold of dimension $\bar d$ with local real coordinates $(x_1, \ldots, x_{\bar d})$ and  analytic  versus $\epsilon$
$$
\rho = \sum_{d=1}^{\bar d} x_{d} \left(\sum_{n=0}^{+\infty} \epsilon^n \bS_d^{(n)}\right)
$$
Moreover, the evolution on this invariant slow linear subspace is governed by the following linear system, analytic versus $\epsilon$
\begin{equation}\label{eq:dynX}
 \dotex x(t) =\left(\sum_{n=1}^{\infty} \epsilon^n F^{(n)}\right) x(t)
\end{equation}
where $F^{(n)}$ is the matrix of real entries $F^{(n)}_{d,d'}$.
Here $ \bS_d^{(n)}$ and $F^{(n)}_{d,d'}$ are given recursively by~\eqref{eq:Fn} and~\eqref{eq:Sn} starting with $\bS_d^ {(0)}=\bS_d$ and
$F^{(0)}_{d,d'}=0$.
 \end{lemma}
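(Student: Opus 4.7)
The plan is to combine Kato's analytic perturbation theory, which supplies the invariant slow manifold and its analytic dependence on $\epsilon$, with the explicit recursion already outlined in the excerpt. The key structural inputs are the biorthogonality $\Tr{\bJ_{d'}\bS_d}=\delta_{d',d}$ and the pseudo-inverse $\ocR$ of $\cL_0$ defined in~\eqref{eq:R}.

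First, I would establish existence and analyticity of $\cD_\epsilon$. By~\eqref{eq:Kchanel}, $0$ is a semisimple eigenvalue of $\cL_0$ with eigenspace $\cD_0$ of dimension $\bar d$, and the rest of the spectrum of $\cL_0$ lies strictly in the open left half-plane. Since the ambient space of operators on $\cH$ is finite dimensional, Kato's analytic perturbation theorem~\cite{kato-book-66} yields, for $\epsilon$ small enough, an invariant decomposition $\cD_\epsilon\oplus\cF_\epsilon$ depending analytically on $\epsilon$, with $\dim\cD_\epsilon=\bar d$. Because $\Tr{\bJ_{d'}\bS_d}=\delta_{d',d}$ at $\epsilon=0$, the functionals $\rho\mapsto\Tr{\bJ_d\rho}$ remain biorthogonal on $\cD_\epsilon$ by continuity, so I can uniquely define the analytic basis $\bS_d(\epsilon)\in\cD_\epsilon$ by $\Tr{\bJ_{d'}\bS_d(\epsilon)}=\delta_{d',d}$. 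This immediately forces $\bS_d^{(0)}=\bS_d$ and, for $n\geq 1$, the normalization $\Tr{\bJ_{d'}\bS_d^{(n)}}=0$. In this basis the restriction of $\cL_0+\epsilon\cL_1$ to $\cD_\epsilon$ has an analytic matrix $F(\epsilon)$ characterized by~\eqref{eq:InvCond}, and the slow system~\eqref{eq:dynX} follows.

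Next, I would derive the recursion~\eqref{eq:Fn}--\eqref{eq:Sn} by induction on $n$, exactly as in the excerpt. The zero-order identification forces $F^{(0)}=0$. For $n\geq 1$, given lower-order terms satisfying the biorthogonality property, matching powers of $\epsilon$ in~\eqref{eq:InvCond} produces the order-$n$ identity
\begin{equation*}
\cL_0(\bS_d^{(n)}) + \cL_1(\bS_d^{(n-1)}) = \sum_{d''=1}^{\bar d}\sum_{r=1}^{n} F_{d'',d}^{(r)}\,\bS_{d''}^{(n-r)}.
\end{equation*}
Pairing with $\bJ_{d'}$ kills the $\cL_0(\bS_d^{(n)})$ term because $\cL_0^*(\bJ_{d'})=0$ and, via biorthogonality of the lower orders, isolates~\eqref{eq:Fn}; the residual equation is then inverted by $\ocR$ to give~\eqref{eq:Sn}, whose construction automatically enforces $\Tr{\bJ_{d'}\bS_d^{(n)}}=0$, closing the induction.

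The main subtlety is the solvability condition for $\bS_d^{(n)}$: $\ocR$ only inverts $\cL_0$ on the range of $\mathrm{Id}-\ocK$, so one must check that $\ocK$ annihilates the argument of $\ocR$ in~\eqref{eq:Sn}. This Fredholm-type compatibility is exactly equivalent to the choice~\eqref{eq:Fn}, because $\Tr{\bJ_{d'}\ocK(\bW)}=\Tr{\bJ_{d'}\bW}$ for every operator $\bW$ and the $\bJ_{d'}$ form a basis of the dual of $\cD_0$; hence~\eqref{eq:Fn} is precisely the obstruction that must vanish, and with it the recursion is well-posed at every order. Convergence of the series on a neighborhood of $\epsilon=0$ is then inherited from the analyticity already provided by Kato's theorem.
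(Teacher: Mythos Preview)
Your argument is correct. The recursive derivation of $F^{(n)}$ and $\bS_d^{(n)}$, together with the verification that~\eqref{eq:Fn} is precisely the Fredholm compatibility condition making~\eqref{eq:Sn} well-posed, matches the paper's computations exactly.

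The one genuine difference is in how existence, uniqueness, and analyticity of $\cD_\epsilon$ are justified. The paper appeals to Carr's approximation lemma from center-manifold theory~\cite{carr-book}, together with the observation that linearity and finite dimension force uniqueness and analyticity. You instead invoke Kato's analytic perturbation theory~\cite{kato-book-66} directly, exploiting that $0$ is a semisimple eigenvalue of $\cL_0$ separated from the rest of the spectrum. For a linear, finite-dimensional, time-invariant system this is the more economical route: Kato gives the analytic spectral projector onto $\cD_\epsilon$ outright, and convergence of the formal series is then automatic. Carr's lemma, by contrast, is built for nonlinear center manifolds and here supplies the same conclusion with more machinery; its advantage is that it would survive a nonlinear perturbation, which the present statement does not need. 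The paper in fact already cites Kato at the opening of Section~\ref{sec:slowManifold} for the existence of $\cD_\epsilon$, so your choice is entirely consistent with its framework.
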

 The precise proof of this lemma is based on the above calculations and on the approximation theorem 5, page 32 of~\cite{carr-book}. Uniqueness and analyticity versus $\epsilon$ is automatically guaranteed since~\eqref{eq:dynL0L1} is linear, time-invariant and of finite dimension.


 \section{Asymptotic expansion of invariant fast manifold} \label{sec:fastMani}

Section~\ref{sec:slowManifold} was devoted to the invariant slow subspace of~\eqref{eq:dynL0L1}.
To compute the equation of the invariant fast subspace, it is in fact enough to search for its invariant operators equations, i.e., $\bar d$ independent linear scalar equations on $\rho$.
For $\epsilon=0$, the fast subspace corresponds to the solutions $\rho_t$ of $\dotex\rho=\cL_0(\rho_t)$ converging to $0$. They are characterized by the following $\bar d$ linearly independent equations:
$$
\forall d \in\{1,\ldots,\bar d\}, \quad \Tr{\bJ_d \rho}=0
.
$$
Thus for $\epsilon >0$, we are looking for the following set of $\bar d$ equations,
 $$\forall d \in\{1,\ldots,\bar d\}, \quad
 \Tr{\bJ_d(\epsilon) \rho} =0
 $$
 where $\bJ_d(\epsilon)=  \sum_{n\geq 0} \epsilon^n \bJ_d^{(n)}$ with $\bJ_d^{(0)}=\bJ_d$.
 Invariance conditions mean that for all $\rho$ such that $\Tr{\bJ_{d'}(\epsilon)\rho}=0$ for all $d'$, we have,
 $$
\forall d \in\{1,\ldots,\bar d\}, \quad \dotex \Tr{\bJ_d(\epsilon) (\cL_0(\rho)+ \epsilon\cL_1(\rho)}=0
.
 $$
 Thus exists a square matrix of entries $G_{d,d'}(\epsilon)$ depending analytically on $\epsilon$ such that, $\forall d \in\{1,\ldots,\bar d\}$,
 $$
\big( \cL_0^{*}+ \epsilon \cL_1^*\big) (\bJ_d(\epsilon)) = \sum_{d'} G_{d,d'}(\epsilon) \bJ_{d'}(\epsilon)
 $$
 Setting $G_{d,d'}(\epsilon)= \sum_{n\geq 0} \epsilon^n G_{d,d'}^{(n)}$ and identifying
terms of the same power $n$ versus $\epsilon$ give:
\begin{itemize}
 \item for $n=0$, $G_{d,d'}^{(0)} =0$ since $\cL_0^{*}(\bJ_d)=0$.

 \item for $n=1$:
 $
 \cL_0^{*}(\bJ_{d}^{(1)}) + \cL_1^*(\bJ_{d}) = \sum_{d'} G_{d,d'}^{(1)} \bJ_{d'};
 $
 taking the trace with $\bS_{d'}$ gives $ G_{d,d'}^{(1)} = \Tr{\bS_{d'}\cL_1^*(\bJ_{d})}$ and
 $
 \bJ_d^{(1)} = \ocR^*\left( \cL_1^*(\bJ_{d}) - \sum_{d'} G_{d,d'}^{(1)} \bJ_{d'} \right)
 $
 where
 \begin{equation}\label{eq:Rad}
\ocR^*(\bW) = \int_{0}^{+\infty} e^{s\cL_0^*}\left(\bW - \ocK^*\big(\bW\big) \right) ~ds
\end{equation}
and $\ocK^*= \lim_{s\mapsto +\infty}  e^{s\cL_0^*}$.

 \item for $n\geq 2$, we have
 \begin{equation}\label{eq:Gn}
 G_{d',d}^{(n)} = \Tr{\bS_{d'} \cL_1^*( \bJ_{d}^{(n-1)})}
 .
\end{equation}
 and
\begin{equation}\label{eq:Jn}
 \bJ_{d}^{(n)} = \ocR^*\left( \cL_1^*(\bJ_{d}^{(n-1)}) - \sum_{d''=1}^{\bar d} \sum_{r=1}^{n} G_{d'',d}^{(r)}
 \bJ_{d''}^{(n-r)} \right)
 .
\end{equation}
\end{itemize}
We have thus proved the following lemma
\begin{lemma}\label{lem:fastInv}
For the slow/fast system~\eqref{eq:dynL0L1} and $\epsilon$ small enough, exists $\bar d$ linearly independent Hermitian operators $\bJ_d(\epsilon)$, analytic versus $\epsilon$ such that the linear subset of Hermitian operators
$$\Big\{ \rho ~\big| ~\forall d\in\{1,\ldots,\bar d\}, \Tr{\bJ_d(\epsilon) \rho} = 0 \Big\}$$
is invariant. Any trajectory of~\eqref{eq:dynL0L1} starting in this subset converges exponentially to zero with a strictly positive rate independent of $\epsilon$. Moreover by construction,  for all $\forall d,d'\in\{1,\ldots,\bar d\}$, $\Tr{\bJ_d(\epsilon) \bS_{d'}}=\delta_{d,d'}$.
\end{lemma}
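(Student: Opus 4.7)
The plan is to parallel Lemma~\ref{lem:slowdyn}, but applied to the adjoint dynamics $\dotex\bJ = (\cL_0^* + \epsilon\cL_1^*)(\bJ)$. First I would invoke Kato's linear perturbation theorem~\cite{kato-book-66} on the finite-dimensional superoperator $\cL_0^* + \epsilon \cL_1^*$: since $\cL_0^*$ has the $\bar d$-dimensional kernel $\mathrm{span}(\bJ_1,\ldots,\bJ_{\bar d})$ complementary to an exponentially stable block (the adjoint of $\Gamma_0$ is Hurwitz), there exists for $\epsilon$ small enough a unique $\bar d$-dimensional invariant subspace, analytic in $\epsilon$, reducing at $\epsilon=0$ to that kernel. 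The explicit basis $\bJ_d(\epsilon) = \sum_{n\geq 0} \epsilon^n \bJ_d^{(n)}$ produced by the recursion~\eqref{eq:Gn}--\eqref{eq:Jn} is then the one distinguished by the normalization constraint $\Tr{\bS_{d'}\bJ_d^{(n)}} = 0$ for $n \geq 1$, which $\ocR^*$ enforces by construction.

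Next I would verify the three asserted properties. For Hermiticity, I would argue by induction on $n$: $\bJ_d^{(0)} = \bJ_d$ is Hermitian; the superoperators $\cL_\sigma^*$ preserve Hermiticity since the GKSL form of $\cL_\sigma$ is Hermitian-preserving; the scalars $G_{d,d'}^{(n)} = \Tr{\bS_{d'}\cL_1^*(\bJ_d^{(n-1)})}$ are real as traces of products of Hermitian operators; and $e^{s\cL_0^*}$, hence $\ocR^*$, inherit Hermiticity preservation. Linear independence of the $\bJ_d(\epsilon)$ then follows from the normalization $\Tr{\bJ_d(\epsilon)\bS_{d'}} = \delta_{d,d'}$, which itself is a consequence of $\Tr{\bJ_d\bS_{d'}} = \delta_{d,d'}$ at zeroth order together with the fact that the image of $\ocR^*$ is Frobenius-orthogonal to $\mathrm{span}(\bS_1,\ldots,\bS_{\bar d})$; this last point is a one-line check using $e^{s\cL_0}(\bS_{d'}) = \bS_{d'}$, which kills the integrand defining $\Tr{\bS_{d'}\ocR^*(\bW)}$.

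For invariance of the subspace $\{\rho : \Tr{\bJ_d(\epsilon)\rho} = 0,~d=1,\ldots,\bar d\}$, I would use the defining identity $(\cL_0^*+\epsilon\cL_1^*)(\bJ_d(\epsilon)) = \sum_{d'} G_{d,d'}(\epsilon)\bJ_{d'}(\epsilon)$ to get $\dotex\Tr{\bJ_d(\epsilon)\rho_t} = \sum_{d'} G_{d,d'}(\epsilon)\Tr{\bJ_{d'}(\epsilon)\rho_t}$, so the subspace is stable under the flow. The uniform exponential decay on this subspace is then obtained by identifying the restricted dynamics with the fast block $\Gamma(\epsilon)$ of the block decomposition recalled in Section~\ref{sec:slowfastdyn}: analyticity of $\Gamma(\epsilon)$ and continuity of the spectrum at $\epsilon=0$ place its eigenvalues inside $\{\Re z \leq -\gamma\}$ for some $\gamma > 0$ uniformly in $\epsilon$ small enough.

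The main obstacle, I expect, is less in the algebra of the recursion---which is essentially the adjoint of what was done in Section~\ref{sec:slowManifold}---than in cleanly separating the formal-series manipulation from the genuine analytic existence: convergence of $\sum \epsilon^n \bJ_d^{(n)}$ must ultimately be inherited from Kato's framework (or from Carr's approximation theorem~\cite{carr-book}, as used for Lemma~\ref{lem:slowdyn}), with the normalization choice guaranteeing that these analytic objects coincide with the Taylor coefficients produced by the recursion. Once this bookkeeping is in place, the three verifications above are short.
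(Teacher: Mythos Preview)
Your proposal is correct and follows essentially the same approach as the paper: derive the recursion~\eqref{eq:Gn}--\eqref{eq:Jn} from the adjoint invariance condition $(\cL_0^*+\epsilon\cL_1^*)(\bJ_d(\epsilon)) = \sum_{d'} G_{d,d'}(\epsilon)\bJ_{d'}(\epsilon)$, and invoke finite-dimensional linear perturbation theory for analyticity. You go further than the paper by explicitly checking Hermiticity, the normalization $\Tr{\bJ_d(\epsilon)\bS_{d'}}=\delta_{d,d'}$, and the uniform exponential rate, all of which the paper leaves implicit in the sentence ``We have thus proved the following lemma.''
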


\section{Slow propagator and TPCP maps} \label{sec:slowPropa}

Take $T>0$. Then the linear map $\cK_{\epsilon,T}$ on $\cD_0$:
$$
\cD_0 \ni \rho_0 \mapsto \cK_{\epsilon,T}(\rho_0) \triangleq \ocK\Big( e^{T(\cL_0 +\epsilon \cL_1)}(\rho_0) \Big)\in\cD_0
$$
is TPCP.   Take $\epsilon$ small enough and consider the $\bar d\times \bar d$ matrix $E(\epsilon)$ of entries
 $$
 E_{d,d'}(\epsilon) = \Tr{\bJ_d(\epsilon) \bS_{d'}(\epsilon)}
 $$
 where $\bS_{d'}(\epsilon)=\sum_{n\geq 0} \epsilon^n \bS_{d'}^{(n)}$ with $\bS_{d'}=\bS_{d'}^{(0)}$ and $\bJ_d(\epsilon)=\sum_{n\geq 0} \epsilon^n \bJ_{d}^{(n)}$ with $\bJ_{d}=\bJ_{d}^{(0)}$ are defined in lemma~\ref{lem:slowdyn} and lemma~\ref{lem:fastInv} respectively.
 Set $\rho_0 = \sum_d x_d(0) \bS_d$ with $x(0)=(x_1(0),\ldots,x_{\bar d}(0)) \in\mathbb{R}^{\bar d} $ and consider
 the solution
$z_{\epsilon}(0)\in\mathbb{R}^{\bar d}$ of the linear system
$$
E(\epsilon) z_\epsilon(0) = x(0)
.
$$
From $\Tr{\bJ_d \bS_{d'}(\epsilon)}=\Tr{\bJ_d(\epsilon) \bS_{d'}}=\delta_{d,d'}$ one has
$$
E(\epsilon) = I_{\bar d} +\epsilon^2 C(\epsilon)
$$
where $I_{\bar d}$ is the identity matrix of size $\bar d$ and the matrix $C(\epsilon)$ is analytic versus $\epsilon$. Thus $E(\epsilon)$ is invertible for $\epsilon$ small with
$E^{-1}(\epsilon) = I_{\bar d} + O(\epsilon^2)$.
Then we have the following lemma
\begin{lemma}\label{lem:Approx}
 For $\epsilon$ small enough, exist $\gamma >0$ and $M>0$ such that for any $T >0$ and any $x(0)\in\mathbb{R}^{\bar d}$ we have
 $$
 \sum_{d}\left| \Tr{\bS_{d} \cK_{\epsilon,T}(\rho_0) } - z_{\epsilon,d}(T) \right| \leq M e^{-\gamma T} \sqrt{\Tr{\rho^2_0}}
 $$
 where $z_{\epsilon}(T)= e^{ T F(\epsilon) } E^{-1}(\epsilon) x(0) $ and $\rho_0= \sum_d x_d(0) \bS_d$.
\end{lemma}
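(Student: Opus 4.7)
The strategy is to decompose $\rho_0$ into a slow component lying on the invariant slow manifold $\cD_\epsilon$ and a fast component in the invariant fast subspace of lemma~\ref{lem:fastInv}, propagate each by $e^{T(\cL_0+\epsilon\cL_1)}$, apply $\ocK$, and show that only the slow component survives up to an exponentially small error.

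First I write $\rho_0 = \rho_0^s + \rho_0^f$ with $\rho_0^s = \sum_{d'} z_{\epsilon,d'}(0)\,\bS_{d'}(\epsilon)\in\cD_\epsilon$ and $\Tr{\bJ_d(\epsilon)\,\rho_0^f}=0$ for every $d$. Taking the trace of this decomposition against $\bJ_d(\epsilon)$, the left-hand side reduces to $x_d(0)$ thanks to $\Tr{\bJ_d(\epsilon)\,\bS_{d'}}=\delta_{d,d'}$ from lemma~\ref{lem:fastInv}, while the right-hand side is $\sum_{d'} E_{d,d'}(\epsilon)\,z_{\epsilon,d'}(0)$; this is precisely the linear system $E(\epsilon)\,z_\epsilon(0)=x(0)$ used in the statement. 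Since $\cD_\epsilon$ is invariant under the full dynamics, lemma~\ref{lem:slowdyn} then gives $e^{T(\cL_0+\epsilon\cL_1)}(\rho_0^s)=\sum_d z_{\epsilon,d}(T)\,\bS_d(\epsilon)$ with $z_\epsilon(T)=e^{TF(\epsilon)}\,E^{-1}(\epsilon)\,x(0)$. Applying $\ocK$ and using $\ocK(\bS_d(\epsilon))=\sum_{d'}\Tr{\bJ_{d'}\,\bS_d(\epsilon)}\,\bS_{d'}=\bS_d$, which again follows from lemma~\ref{lem:fastInv} applied to the expansions produced in lemma~\ref{lem:slowdyn}, yields the exact identity $\Tr{\bS_d\,\ocK(e^{T(\cL_0+\epsilon\cL_1)}(\rho_0^s))} = z_{\epsilon,d}(T)$.

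The full error therefore reduces to the fast contribution $\Tr{\bJ_d\,e^{T(\cL_0+\epsilon\cL_1)}(\rho_0^f)}$, where I also used $\Tr{\bS_d\,\ocK(\rho)}=\Tr{\bJ_d\,\rho}$. By lemma~\ref{lem:fastInv} the restriction of $\cL_0+\epsilon\cL_1$ to the fast invariant subspace has its spectrum in $\{\mathrm{Re}\,\lambda\le -\gamma\}$ for some $\gamma>0$ independent of $\epsilon$ small, hence there is a constant $M_1$ uniform in $\epsilon$ with $\|e^{T(\cL_0+\epsilon\cL_1)}(\rho_0^f)\|_{\mathrm{HS}}\le M_1 e^{-\gamma T}\|\rho_0^f\|_{\mathrm{HS}}$. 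Cauchy--Schwarz in the Frobenius inner product then bounds $\sum_d |\Tr{\bJ_d\,e^{T(\cL_0+\epsilon\cL_1)}(\rho_0^f)}|$ by $M_2 e^{-\gamma T}\|\rho_0^f\|_{\mathrm{HS}}$ with $M_2=M_1\sum_d\|\bJ_d\|_{\mathrm{HS}}$. Since the oblique projection $\rho_0\mapsto\rho_0^f$ onto the fast subspace along $\cD_\epsilon$ is analytic in $\epsilon$ on a finite-dimensional space, its operator norm stays bounded by some $M_3$ for $\epsilon$ small, and $\|\rho_0\|_{\mathrm{HS}}=\sqrt{\Tr{\rho_0^2}}$ since $\rho_0$ is Hermitian. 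Setting $M=M_2 M_3$ gives the claim.

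The main obstacle is establishing the uniformity in $\epsilon$ of both the rate $\gamma$ and the norm of the oblique slow/fast projection. Both are consequences of the analytic block diagonalization recalled in section~\ref{sec:slowfastdyn} (Kato~\cite{kato-book-66}): the spectral gap separating the stable eigenvalues of $\Gamma_0$ from the kernel of $\cL_0$ persists for $\epsilon$ small, which simultaneously controls $\gamma$ and the conditioning of the splitting into $\cD_\epsilon$ and its fast complement.
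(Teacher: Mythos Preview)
Your proof is correct and follows essentially the same approach as the paper's: decompose $\rho_0$ into a slow component on $\cD_\epsilon$ and a fast component in the invariant fast subspace, propagate each, and use the exponential decay on the fast part together with the identity $\Tr{\bJ_d\,\bS_{d'}(\epsilon)}=\delta_{d,d'}$ to recover $z_{\epsilon,d}(T)$ exactly from the slow part. The paper's argument is terser (it invokes $\ocK^*(\bS_d)=\bJ_d$ and the shadowing picture without spelling out the Cauchy--Schwarz step or the uniformity in $\epsilon$), but the logical skeleton is identical to yours.
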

\begin{proof}
It is based on the following arguments inspired by~\cite[theorem 2, point b, page 4]{carr-book} and by the notion of shadow trajectories around hyperbolic invariant manifold~(see, e.g.,\cite{Hirsh_shadowing_1993} and~\footnote{\scriptsize{\tt
https://spartacus-idh.com/liseuse/116/\symbol{35}page/142}}).
Since $\ocK^*(\bS_d)=\bJ_d$ one has
$$
 \Tr{\bS_{d} \cK_{\epsilon,T}(\rho_0) } = \Tr{\bJ_d e^{T(\cL_0+\epsilon\cL_1)}(\rho_0)}
$$
The initial state on the invariant manifold $ \rho_{\epsilon,0}=\sum_d z_{\epsilon,d}(0) \bS_d(\epsilon)$ and $\rho_0$ are such that
$\rho_0-\rho_{\epsilon,0}$ belongs to the invariant fast manifold. Thus exist $\gamma >0$ and $M>0$ such that
$$
\left\|e^{T(\cL_0+\epsilon \cL_1)} (\rho_0-\rho_{\epsilon,0})\right\| \leq M e^{- \gamma T} \left\|\rho_0-\rho_{\epsilon,0}\right\|
.
$$
We conclude with $e^{T(\cL_0+\epsilon \cL_1)} (\rho_{\epsilon,0})= \sum_d z_{\epsilon,d}(T) \bS_d(\epsilon)$ and
$\Tr{\bJ_d \bS_{d'}(\epsilon)} = \delta_{d,d'}$.
\end{proof}

Take $\bar T>0$. This lemma implies that for $T = \bar T /\epsilon$ corresponding to a slow time-scale,  the approximation of $\cK_{\epsilon,\frac{\bar T}{\epsilon }}(\rho_0)$ by
$\sum_d z_{\epsilon,d}(\frac{\bar T}{\epsilon }) \bS_d$ is exponentially precise. Thus, if we restrict the asymptotic expansion to second-order terms and consider evolution over a slow time-scale $\frac{\bar T}{\epsilon }$, we are close up to order $2$ terms to a TPCP map as stated  in the following lemma.
\begin{lemma} \label{lem:approx2}
Consider the slow/fast system~\eqref{eq:dynL0L1}, the second-order approximation of the slow dynamics
$$
\dotex x = \big(\epsilon F^{(1)} + \epsilon^2 F^{(2)}\big) x
$$
and its propagator matrix $\mathcal{G}^{(2)}_{\frac{\bar T}{\epsilon}}= e^{\frac{\bar T}{\epsilon }(\epsilon F^{(1)} + \epsilon^2 F^{(2)})}$ over the slow time scale $\bar{T}/\epsilon$.
For any $\bar T >0$, exists $M_{\bar T} >0$ such that for any $x(0)\in\mathbb{R}^{\bar d}$ we have
$$
\left\| \cK_{\epsilon,\tfrac{\bar T}{\epsilon }}(\rho_0) - \rho(\tfrac{\bar T}{\epsilon }) \right\| \leq M_{\bar T} \epsilon^2 \|x(0)\|
$$
where $\rho_0=\sum_d x_d(0) \bS_d$ and $\rho(\tfrac{\bar T}{\epsilon })=\sum_{d} x_d(\tfrac{\bar T}{\epsilon }) \bS_d$
with $x(\tfrac{\bar T}{\epsilon })= \mathcal{G}^{(2)}_{\frac{\bar T}{\epsilon}} x(0)$.
\end{lemma}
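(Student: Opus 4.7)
The plan is to start from Lemma~\ref{lem:Approx} and use it as a black box: it already tells us that $\cK_{\epsilon,T}(\rho_0)$ agrees with $\sum_d z_{\epsilon,d}(T)\bS_d$ up to a term of norm $M e^{-\gamma T}\sqrt{\Tr{\rho_0^2}}$, where $z_\epsilon(T)=e^{TF(\epsilon)}E^{-1}(\epsilon)x(0)$ is the exact slow propagator. Specialising to the slow time scale $T=\bar T/\epsilon$ the residual $M\exp(-\gamma\bar T/\epsilon)$ is exponentially small in $1/\epsilon$, hence dominated by any positive power of $\epsilon$. So the whole problem reduces to comparing the finite dimensional propagators $e^{T F(\epsilon)}E^{-1}(\epsilon)$ and $G^{(2)}_{\epsilon,T}=e^{T(\epsilon F^{(1)}+\epsilon^2 F^{(2)})}$ acting on $x(0)$, and then expressing the resulting error in the operator norm using orthonormality of $(\bS_d)$.

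The key algebraic observation is that, on the slow time scale $T=\bar T/\epsilon$,
$$
TF(\epsilon)=\bar T\Bigl(F^{(1)}+\epsilon F^{(2)}+\sum_{n\ge 3}\epsilon^{\,n-1}F^{(n)}\Bigr)
=T\bigl(\epsilon F^{(1)}+\epsilon^2 F^{(2)}\bigr)+\bar T\,R(\epsilon),
$$
with $R(\epsilon)=O(\epsilon^2)$ uniformly in $\epsilon$ small, while $T(\epsilon F^{(1)}+\epsilon^2F^{(2)})$ stays bounded (by a constant depending on $\bar T$). So the two exponents differ by a perturbation of size $O_{\bar T}(\epsilon^2)$ added to a bounded matrix. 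I would then invoke the elementary bound $\|e^{A+B}-e^A\|\le \|B\|\,e^{\|A\|+\|B\|}$, which follows from Duhamel's identity $e^{A+B}-e^A=\int_0^1 e^{s(A+B)}B\,e^{(1-s)A}\,ds$, to obtain $\|e^{TF(\epsilon)}-G^{(2)}_{\epsilon,T}\|\le C_{\bar T}\,\epsilon^2$. Multiplying by $E^{-1}(\epsilon)=I_{\bar d}+O(\epsilon^2)$ (guaranteed by the formula $E(\epsilon)=I_{\bar d}+\epsilon^2 C(\epsilon)$ stated before Lemma~\ref{lem:Approx}) only adds another $O(\epsilon^2)$ term since $e^{TF(\epsilon)}$ itself has bounded norm on $[0,\bar T/\epsilon]$. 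Hence $\|z_\epsilon(\bar T/\epsilon)-x(\bar T/\epsilon)\|\le M'_{\bar T}\,\epsilon^2\|x(0)\|$.

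To translate this back to operators, I would note that $\cK_{\epsilon,T}(\rho_0)\in\cD_0$, so it expands in the orthonormal basis $(\bS_d)$ with coefficients $\Tr{\bS_d\cK_{\epsilon,T}(\rho_0)}$; consequently, by orthonormality of $(\bS_d)$ for the Frobenius product,
$$
\bigl\|\cK_{\epsilon,T}(\rho_0)-\textstyle\sum_d z_{\epsilon,d}(T)\bS_d\bigr\|
\le \sum_d\bigl|\Tr{\bS_d\cK_{\epsilon,T}(\rho_0)}-z_{\epsilon,d}(T)\bigr|,
$$
which Lemma~\ref{lem:Approx} bounds by $Me^{-\gamma T}\sqrt{\Tr{\rho_0^2}}\le M\|x(0)\|e^{-\gamma T}$. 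Combining with the finite-dimensional estimate above and swallowing the exponentially small term into an $\epsilon^2$ bound (for $\epsilon$ small enough, depending on $\bar T$) gives the announced $M_{\bar T}\epsilon^2\|x(0)\|$ bound.

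I expect the only genuinely delicate point to be the uniform control of the matrix exponential over the long horizon $T=\bar T/\epsilon$: a naive bound would yield a factor $e^{\|TF(\epsilon)\|}$ that blows up, so one must be careful to write the perturbation in the already rescaled form $\bar T F^{(1)}+\epsilon \bar T F^{(2)}+\bar T R(\epsilon)$ before applying Duhamel, so that all exponents remain $O_{\bar T}(1)$. Everything else is either the linearity and orthonormality of the $\bS_d$'s or a direct quotation of Lemma~\ref{lem:Approx}.
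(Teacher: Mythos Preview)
Your proposal is correct and follows essentially the same route as the paper's proof: invoke Lemma~\ref{lem:Approx}, rewrite $\tfrac{\bar T}{\epsilon}F(\epsilon)=\bar T\bigl(F^{(1)}+\epsilon F^{(2)}\bigr)+O(\epsilon^2)$ so that the exponents differ by an $O_{\bar T}(\epsilon^2)$ perturbation of a bounded matrix, and use $E^{-1}(\epsilon)=I_{\bar d}+O(\epsilon^2)$. The paper's proof is a two-line sketch of exactly these three ingredients; you have simply made the matrix-exponential comparison explicit via Duhamel and spelled out the passage from the $\ell^1$ bound of Lemma~\ref{lem:Approx} to the Frobenius norm via orthonormality of the~$\bS_d$.
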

Notice that
$$F_{d',d}^{(1)} = \Tr{\bJ_{d'} \cL_1(\bS_{d})}$$ and ( $\ocR$ defined in~\eqref{eq:R})
 $$ F^{(2)}_{d',d}
 = \Tr{ \cL_1^{*} (\bJ_{d'}) ~ \ocR\Big( \cL_1(\bS_{d})\Big) }
$$
can be computed from the nominal operators $\bS_d$ and $\bJ_{d'}$ defined in section~\ref{sec:slowfastdyn}.
\begin{proof}
It combines lemma~\ref{lem:Approx}, the fact that $\mathcal{G}^{(2)}_{\frac{\bar T}{\epsilon}}= e^{\frac{\bar T}{\epsilon } F(\epsilon) } + O(\epsilon^2)$ since $\frac{\bar T}{\epsilon } F(\epsilon) = \bar T (F^{(1)} + \epsilon F^{(2)}) + O(\epsilon^2) $ and that $ E^{-1}(\epsilon) x(0) = x(0)+ O(\epsilon^2)$ since $E^{-1}(\epsilon)= I_{\bar d} + O(\epsilon^2)$.
\end{proof}

\section{Concluding remarks}\label{sec:conclusion}

For infinite dimensional systems, mathematical justifications of such asymptotic expansions  are not straightforward (existence, uniqueness  and convergence of the series). For infinite dimensional systems constructed with mainly bounded operators, precise mathematical guaranties are available (see, e.g., \cite{AULBACH200091,HOCHS20197263}). One cannot use these available results when the dynamics rely on unbounded operators parameterizing the GKSL dynamical model~\eqref{eq:dynL0L1}.
 The mathematical justification of such infinite dimensional extension requires precise functional analysis investigations and assumptions. They will be  addressed in future developments.

These asymptotic expansion can be exploited numerically to simulate on a classical computer,  composite open quantum systems encountered in quantum error correction (see, e.g., \cite{regent2023adiabatic} for preliminary results with cat-qubit systems).

\section*{Acknowledgments}

We thank Philippe Campagne-Ibarcq, J\'{e}r\'{e}mie Guillaud, Mazyar Mirrahimi, Claude Le Bris, Alain Sarlette, Lev-Arcady Sellem and Antoine Tilloy  for numerous discussions and scientific exchanges on model reduction, numerical simulations, cat-qubits and bosonic codes.
We thank also Laurent Praly for precious indications on the concept of shadow trajectories around hyperbolic invariant submanifold.

This project has received funding from the Plan France 2030 through the project ANR-22-PETQ-0006.

This project has received funding from the European Research Council (ERC) under the European Union’s Horizon 2020 research and innovation programme (grant agreement No. [884762]).

\bibliographystyle{unsrt} 

\begin{thebibliography}{10}

\bibitem{BrionJPA07}
E~Brion, L~H Pedersen, and K~M{\o}lmer.
\newblock Adiabatic elimination in a lambda system.
\newblock {\em Journal of Physics A: Mathematical and Theoretical}, 40(5):1033,
 2007.

\bibitem{ZanardiPRA2015}
Paolo Zanardi and Lorenzo Campos~Venuti.
\newblock Geometry, robustness, and emerging unitarity in dissipation-projected
 dynamics.
\newblock {\em Phys. Rev. A}, 91:052324, May 2015.

\bibitem{MacieszczakGutaLesanovskyEtAl2016}
Katarzyna Macieszczak, Madalin Guta, Igor Lesanovsky, and Juan~P. Garrahan.
\newblock Towards a theory of metastability in open quantum dynamics.
\newblock {\em Phys. Rev. Lett.}, 116(24):240404, June 2016.

\bibitem{AzouitQST2017}
R.~Azouit, F.~Chittaro, A.~Sarlette, and P.~Rouchon.
\newblock Towards generic adiabatic elimination for bipartite open quantum
 systems.
\newblock {\em Quantum Science and Technology}, 2:044011, 2017.

\bibitem{BurgarthQ2019}
D.~Burgarth, P.~Facchi, H.~Nakazato, S.~Pascazio, and K.~Yuasa.
\newblock Generalized adiabatic theorem and strong-coupling limits.
\newblock {\em Quantum}, 3:152, 2019.

\bibitem{VerhulstBook2005}
F.~Verhulst.
\newblock {\em Methods and Applications of Singular Perturbations: Boundary
 Layers and Multiple Timescale Dynamics,}.
\newblock Springer, 2005.

\bibitem{kokotovic-book-1}
P.V. Kokotovic and H.K. Kahlil.
\newblock {\em Singular Perturbations in Systems and Control}.
\newblock IEEE Press, New York, 1986.

\bibitem{Fenichel79}
N.~Fenichel.
\newblock Geometric singular perturbation theory for ordinary differential
 equations.
\newblock {\em J. Diff. Equations}, 31:53--98, 1979.

\bibitem{BouteS2008}
Luc Bouten and Andrew Silberfarb.
\newblock Adiabatic elimination in quantum stochastic models.
\newblock {\em Communications in Mathematical Physics}, 283(2):491--505, 2008.

\bibitem{GoughJMP2010}
J.E. Gough, H.I. Nurdin, and S.~Wildfeuer.
\newblock Commutativity of the adiabatic elimination limit of fast oscillatory
 components and the instantaneous feedback limit in quantum feedback networks.
\newblock {\em Journ. of Math. Phys.}, 51, 2010.

\bibitem{ZanarC2014PRL}
P.~Zanardi and L.~Campos~Venuti.
\newblock Coherent quantum dynamics in steady-state manifolds of strongly
 dissipative systems.
\newblock {\em Phys. Rev. Lett.}, 113(24):240406--, December 2014.

\bibitem{BurgarthPRA21}
Daniel Burgarth, Paolo Facchi, Hiromichi Nakazato, Saverio Pascazio, and Kazuya
 Yuasa.
\newblock Eternal adiabaticity in quantum evolution.
\newblock {\em Phys. Rev. A}, 103:032214, Mar 2021.

\bibitem{AlberJ2014PRA}
V.~Albert and L.~Jiang.
\newblock Symmetries and conserved quantities in {L}indblad master equations.
\newblock {\em Phys. Rev. A}, 89(2):022118--, February 2014.

\bibitem{kato-book-66}
T.~Kato.
\newblock {\em Perturbation Theory for Linear Operators}.
\newblock Springer, 1966.

\bibitem{carr-book}
J.~Carr.
\newblock {\em Application of Center Manifold Theory}.
\newblock {Springer}, {1981}.

\bibitem{Hirsh_shadowing_1993}
M.W. Hirsch.
\newblock {\em Diffrential equations, dynamical systems and control science},
 chapter Asymptotic Phase, Shadowing and Reaction-Diffusion Systems, pages
 87--99.
\newblock Marcel Dekker,New-York, 1993.

\bibitem{AULBACH200091}
Bernd Aulbach and Thomas Wanner.
\newblock {The Hartman-Grobman theorem for Carath\'{e}odory-type differential
 equations in Banach spaces}.
\newblock {\em Nonlinear Analysis: Theory, Methods \& Applications},
 40(1):91--104, 2000.

\bibitem{HOCHS20197263}
Peter Hochs and A.J. Roberts.
\newblock Normal forms and invariant manifolds for nonlinear, non-autonomous
 {PDEs}, viewed as {ODEs} in infinite dimensions.
\newblock {\em Journal of Differential Equations}, 267(12):7263--7312, 2019.

\bibitem{regent2023adiabatic}
F.M. Le~Regent and P.~Rouchon.
\newblock Adiabatic elimination for composite open quantum systems: Heisenberg
 formulation and numerical simulations. {\em arXiv:2303.05089}.

\end{thebibliography}

\end{document}